\newcommand{\mapk}{\texttt{mapke5e6}}
\newcommand{\Support}{\operatorname{supp}}
\newcommand{\Supportp}{\Support^{+}}
\newcommand{\Supportm}{\Support^{-}}
\newcommand{\Supportwm}{\Support^{w-}}
\newcommand{\Supportsm}{\Support^{s-}}
\newcommand{\Supportz}{\Support^{0}}
\newcommand{\Newton}{\operatorname{newton}}
\newcommand{\moc}{\operatorname{moc}}
\newcommand{\dist}{\operatorname{dist}}
\newcommand{\sign}{\operatorname{sign}}
\newcommand{\coef}{\operatorname{coeff}}
\newcommand{\N}{\mathbb{N}}
\newcommand{\zero}{\mathbf{0}}
\newcommand{\Q}{\mathbb{Q}}
\newcommand{\R}{\mathbb{R}}
\newcommand{\Z}{\mathbb{Z}}
\newcommand{\bb}{\mathbf{b}}
\newcommand{\nn}{\mathbf{n}}
\newcommand{\pp}{\mathbf{p}}
\newcommand{\qq}{\mathbf{q}}
\newcommand{\vv}{\mathbf{v}}
\newcommand{\ttt}{\mathbf{t}}
\newcommand{\xx}{\mathbf{x}}
\newcommand{\zz}{\mathbf{z}}
\newcommand{\uu}{\mathbf{-1}}
\newcommand{\abs}[1]{\left|#1\right|}
\newtheorem{theorem}{Theorem}
\theoremstyle{definition}
\newtheorem{lemma}[theorem]{Lemma}
\begin{document}

\title{Subtropical Real Root Finding}

\author{
Thomas Sturm\\
{Max Planck Institute for Informatics}\\
{Saarbrücken, Germany}\\
\url{sturm@mpi-inf.mpg.de}}

\date{January 20, 2015}

\maketitle
\begin{abstract}
  We describe a new incomplete but terminating method for real root finding for
  large multivariate polynomials. We take an abstract view of the polynomial as
  the set of exponent vectors associated with sign information on the
  coefficients. Then we employ linear programming to heuristically find roots.
  There is a specialized variant for roots with exclusively positive
  coordinates, which is of considerable interest for applications in chemistry
  and systems biology. An implementation of our method combining the computer
  algebra system Reduce with the linear programming solver Gurobi has been
  successfully applied to input data originating from established mathematical
  models used in these areas. We have solved several hundred problems with up to
  more than 800\,000 monomials in up to 10 variables with degrees up to 12. Our
  method has failed due to its incompleteness in less than 8 percent of the
  cases.
\end{abstract}

\section{Introduction}

Our work discussed here is motivated by our studies of Hopf bifurcations
\cite{HaleKocak:91a,HairerNorsett:93a} for reaction systems in chemistry and
gene regulatory networks in systems biology, which are originally given by
systems of ordinary differential equations. Hopf bifurcations can be described
algebraically
\cite{El-KahouiWeber:00a,WangXia:05a,GatermannHosten:05a,GatermannEiswirth:05a},
resulting in one very large multivariate polynomial equation $f=0$ subject to
few much simpler polynomial side conditions $g_1>0$, \dots, $g_n>0$. For such
systems one is interested in feasibility over the reals and, in the positive
case, in at least one feasible point. It turns out that, generally,
scientifically meaningful information can be obtained already by checking only
the feasibility of $f=0$, which is the focus of this article. For further
details on the scientific background, we refer the reader to our publications
\cite{SturmWeber:08a,SturmWeber:09a,WeberSturm:11a,ErramiSeiler:11a,ErramiEiswirth:13a}.

With one of our models, viz.~\emph{Mitogen-activated protein kinase (MAPK)}, we
obtain and solve polynomials of considerable size. Our currently largest
instance $\mapk$ contains 863438 monomials in 10 variables. One of the variables
occurs with degree 12, all other variables occur with degree 5. Such problem
sizes are clearly beyond the scope of classical methods in symbolic computation.
To give an impression, the size of an input file with $\mapk$ in infix notation
is 30~MB large. \LaTeX-formatted printing of $\mapk$ would fill more than 3000
pages in this document. The MAPK model actually yields even larger instances,
which we, unfortunately, cannot generate at present, because in our toolchain
Maple cannot produce polynomials larger than 32~MB.

This article introduces an incomplete but terminating algorithm for finding
real roots of large multivariate polynomials. The principle idea is to take an
abstract view of the polynomial as the set of its exponent vectors supplemented
with sign information on the corresponding coefficients. To that extent, out
approach is quite similar to tropical algebraic geometry~\cite{Sturmfels2002a}.
However, after our abstraction we do not consider tropical varieties but employ
linear programming to determine certain suitable points in the Newton polytope,
which somewhat resembles successful approaches to sum-of-square
decompositions~\cite{DBLP:journals/tcs/PeyrlP08}.

We have implemented our algorithm in Reduce~\cite{HearnSchopf:a} using direct
function calls to the dynamic library of the LP solver
Gurobi~\cite{Gurobi-Optimization-Inc.:14a}. In practical computations on several
hundred examples, our method has failed do to its incompleteness in less than 8
percent of the cases. The longest computation time observed was around 16~s. As
mentioned above, the limiting factor at present is the technical generation of
even larger input.

In Section~\ref{SE:positive} we introduce a specialization of our method that
only finds roots with all positive coordinates. This is highly relevant in our
context of reaction networks, where typically all variables are known to be
positive. We also discuss an illustrating example in detail.
Section~\ref{SE:arbitrary} generalizes our method to arbitrary roots. In
Section~\ref{SE:practical} we discuss issues and share experiences related to a
practical implementation of our method. In Section~\ref{SE:computations} we
evaluate the performance of our method with respect to efficiency and to its
incompleteness on several hundred examples originating from four different
chemical and biological models.

\section{Finding Roots with Positive Coordinates}\label{SE:positive}

Denote $\N_1=\N\setminus\{0\}$, and let $d\in\N_1$. For $a\in\R$, vectors
$\xx=(x_1,\dots,x_d)$ of either indeterminates or real numbers, and
$\pp=(p_1,\dots,p_d)\in\N^d$, we use the notations
$a^\pp=(a^{p_1},\dots,a^{p_d})$ and $\xx^\pp=x_1^{d_1}\cdots x_d^{p_d}$. We
will, however, never consider a vector to the power of a number. Our notations
are compatible with the standard scalar product as follows:
\begin{displaymath}
  (a^{\pp})^{\qq}=(a^{p_1},\dots,a^{p_d})^{\qq}=a^{p_1q_1}\cdots a^{p_dq_d}=a^{\pp\qq}.
\end{displaymath}

Consider a multivariate integer polynomial
\begin{displaymath}
  f=\sum\limits_{\pp\in\Support(f)}\coef(f,\pp)\cdot \xx^\pp\in\Z[\xx],
\end{displaymath}
where $\coef(f,\pp)\neq0$ for $\pp\in\Support(f)$, which is called the
\emph{support} of~$f$.

\subsection{Finding a Point with Positive Value}
The \emph{Newton polytope} of
$f$ is the convex hull of $\Support(f)$. It forms a polyhedron in $\R^d$, which
we identify with its vertices, formally $\Newton(f)\subseteq\Support(f)$. The
following lemma is a straightforward consequence of the convex hull property.
\begin{lemma}\label{LE:newton}
  Let $f=\coef(f,\pp)\cdot\xx^\pp+f'\in\Z[\xx]$. Assume that
  ${\pp\notin\Newton(f)}$. Then $\Newton(f)=\Newton(f')$.\qed
\end{lemma}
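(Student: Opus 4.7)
The plan is to reduce the identity of the vertex sets to the identity of the underlying convex hulls, and then appeal to the standard characterization of vertices as extreme points.

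First I would unwind the notation. The decomposition $f=\coef(f,\pp)\cdot\xx^\pp+f'$ together with the convention $\coef(f,\pp)\neq 0$ forces $\pp\notin\Support(f')$, so we have the disjoint union $\Support(f)=\Support(f')\cup\{\pp\}$. Consequently the convex hulls satisfy $\operatorname{conv}(\Support(f'))\subseteq\operatorname{conv}(\Support(f))$, and the Newton polytopes, viewed as polyhedra, differ at most by the point $\pp$.

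Next I would use the hypothesis $\pp\notin\Newton(f)$. By the identification of $\Newton(f)$ with the vertex set of $\operatorname{conv}(\Support(f))$, this means $\pp$ is not an extreme point of that convex hull. Since $\pp$ does belong to $\Support(f)$, a standard convex geometry fact (an extreme-point characterization, or Carath\'eodory's theorem applied to $\Support(f)\setminus\{\pp\}$) gives $\pp\in\operatorname{conv}(\Support(f)\setminus\{\pp\})=\operatorname{conv}(\Support(f'))$. Therefore $\operatorname{conv}(\Support(f))=\operatorname{conv}(\Support(f'))$, and two convex hulls that coincide as sets have the same set of vertices. This yields $\Newton(f)=\Newton(f')$.

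The only delicate step is the middle one: invoking the fact that a point in a finite set which is not a vertex of the convex hull must lie in the convex hull of the remaining points. This is elementary but worth stating explicitly, because the lemma is phrased purely in terms of vertex sets rather than of convex hulls, and the translation between the two viewpoints is where the argument actually happens. Everything else is bookkeeping about supports.
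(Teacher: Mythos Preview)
Your argument is correct and is precisely the standard convex-geometry reasoning the paper has in mind: the paper does not actually give a proof but only states that the lemma ``is a straightforward consequence of the convex hull property,'' so your write-up is simply the natural elaboration of that remark. There is no alternative route to compare against.
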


For $\pp\in \Support(f)$ we define $\sign(f,\pp)=\sign(\coef(f,\pp))$. We
partition the support of $f$ as follows:
\begin{eqnarray*}
  \Support(f)&=&\Supportp(f)\mathbin{\dot\cup}\Supportm(f)\mathbin{\dot\cup}\Supportz(f),\\
  \Supportp(f)&=&\{\,\pp\in\Support(f)\mid \sign(f,\pp)>0\land \pp\neq\zero\,\},\\
  \Supportm(f)&=&\{\,\pp\in\Support(f)\mid \sign(f,\pp)<0\land \pp\neq\zero\,\},\\
  \Supportz(f)&=&\Support(f)\cap\{\zero\}.
\end{eqnarray*}
Let $\Supportp(f)=\{\pp_1,\dots,\pp_r\}$,
$\Supportm(f)=\{\pp_{r+1},\dots,\pp_s\}$, and fix any order on $\Support(f)$.
The \emph{basic LP matrix $B(f)$} is composed as follows, where the last row is
present if and only if ${\Supportz(f)\neq\emptyset}$:
\begin{displaymath}
  \renewcommand{\arraystretch}{1.25}
  B(f)=
  \left[\begin{array}{c}
          \begin{array}{c}
          \\
            B^+(f)\phantom{\vdots}\\
            \strut\\
          \hline
          \strut\\
          \strut B^-(f)\phantom{\vdots}\\
          \strut\\
            \hline
            (\zero,-1)
          \end{array}
        \end{array}\right]
=\begin{array}{c}            
   \left[\begin{array}{cccc}
           p_{11}&\dots&p_{1d}&-1\\
           \vdots&\ddots&\vdots&\vdots\\
           p_{rd}&\dots&p_{rd}&-1\\
          \hline
           p_{r+1,1}&\dots&p_{r+1,d}&-1\\
           \vdots&\ddots&\vdots&\vdots\\
           p_{s,d}&\dots&p_{s,d}&-1\\
           \hline
           0 & \dots & 0 & -1
        \end{array}\right].
\end{array}
\end{displaymath}
Considering matrices concatenations of their rows, we write this also as
$B(f)=B^+(f)\circ B^-(f)\circ(\zero,-1)^*$. Whenever we write for a given matrix
$B\in\Z^{m\times n}$ a product $N^*B$, then we implicitly agree that
\begin{displaymath}
  N^*=\left[
  \begin{array}{cccccc}
    -1 & 0 & 0 & 0 & \dots \\
    0 & 1 & 0 & 0 & \dots \\
    0 & 0 & 1 & 0 & \dots \\
    \vdots & \vdots & \ddots & \ddots & \ddots\\
  \end{array}\right]\in\Z^{m\times m}.
\end{displaymath}
That is, the multiplication $N^*B$ replaces the elements of the first row of $B$
with their additive inverses. Similarly, $\uu=(-1,\dots,-1)^T$ is generally a
column matrix of suitable length. In these terms, we are going to consider
systems
\begin{displaymath}
  N^*\cdot B(f)\cdot \xx^T\leq\uu,\quad\text{where}\quad \xx=(\nn,c)\in\R^{d+1},
\end{displaymath}
which can be rewritten as follows:
\begin{eqnarray*}
  \pp_1\nn-c&\geq&1\\
  \pp_i\nn-c&\leq&-1,\qquad i\in\{2,\dots,s\}.
\end{eqnarray*}

\begin{lemma}\label{LE:one}
  Let $f\in\Z[\xx]$. Let $\nn\in\R^d$, and let $c\in\R$. Then the following are
  equivalent:
  \begin{enumerate}[(i)]
  \item The hyperplane $H(\xx)$ defined by $\nn\xx=c$ strictly separates the
    point $\pp_1$ from $\Support(f)\setminus\{\pp_1\}$, and the normal vector
    $\nn$ is pointing from $H(\xx)$ in direction $\pp_1$. In particular,
    $\pp_1\in\Newton(f)$.
  \item There is $0<\lambda\in\R$ s.t.~$N^*\cdot
    B(f)\cdot(\lambda\nn,\lambda c)^T\leq\uu$.
  \end{enumerate}
\end{lemma}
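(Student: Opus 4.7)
The plan is to unfold the matrix inequality $N^*\cdot B(f)\cdot(\lambda\nn,\lambda c)^T\leq\uu$ into an explicit system and match it row by row against the geometric condition~(i). Recall that, as already spelled out in the excerpt, the unscaled system reads $\pp_1\nn-c\geq 1$, together with $\pp_i\nn-c\leq-1$ for $i\in\{2,\dots,s\}$, and, when $\zero\in\Support(f)$, the additional inequality $-c\leq-1$. After substituting $(\lambda\nn,\lambda c)$ these become $\pp_1\nn-c\geq 1/\lambda$, $\pp_i\nn-c\leq-1/\lambda$, and $-c\leq-1/\lambda$, which for $\lambda>0$ are simply positively rescaled strict versions of the separation inequalities.

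For the direction (ii)$\Rightarrow$(i), I take such a $\lambda>0$ and a witness $(\nn,c)$, divide the rescaled inequalities by $\lambda$, and read off that $\pp_1\nn>c$ while $\qq\nn<c$ for every $\qq\in\Support(f)\setminus\{\pp_1\}$, where the case $\qq=\zero$ is covered by the optional last row. This is exactly strict separation of $\pp_1$ from $\Support(f)\setminus\{\pp_1\}$ by $H$, and the inequality $\pp_1\nn>c$ identifies $\pp_1$ as lying on the side toward which $\nn$ points. The incidental claim $\pp_1\in\Newton(f)$ then follows, since a point of a finite set that can be strictly separated from the remaining points by an affine hyperplane must be a vertex of its convex hull.

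For the direction (i)$\Rightarrow$(ii), I start from the strict inequalities $\pp_1\nn-c>0$, $\pp_i\nn-c<0$ for $i\geq 2$, and $-c<0$ when $\zero\in\Support(f)$, which are precisely the unfolding of condition~(i). I then set
\begin{displaymath}
  \varepsilon=\min\bigl(\{\pp_1\nn-c\}\cup\{c-\pp_i\nn\mid 2\leq i\leq s\}\cup\{c\mid\zero\in\Support(f)\}\bigr),
\end{displaymath}
which is a strictly positive real since it is the minimum of finitely many positive numbers. Choosing $\lambda=1/\varepsilon>0$ multiplies each strict inequality by a common positive factor that is large enough to turn it into the corresponding non-strict $\geq 1$ or $\leq-1$ bound, i.e.\ into exactly the rows of $N^*\cdot B(f)\cdot(\lambda\nn,\lambda c)^T\leq\uu$.

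The only subtlety worth flagging is bookkeeping: one has to be careful that the optional bottom row $(\zero,-1)$ of $B(f)$, present precisely when the constant monomial occurs, is treated uniformly with the $\Supportm$ rows, and that the sign flip encoded by $N^*$ really corresponds to $\pp_1$ being on the positive side of $H$ rather than the negative one. Once these conventions are aligned, both implications reduce to elementary scaling of linear inequalities, so no serious obstacle is expected.
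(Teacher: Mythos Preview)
Your argument is correct and follows essentially the same route as the paper's own proof: unfold the matrix inequality into the system $\pp_1\nn-c\geq 1$, $\pp_i\nn-c\leq-1$, and scale by a positive factor determined by the minimum gap. The only cosmetic differences are that the paper expresses the scaling constant via distances to the hyperplane, choosing $\lambda=(\delta\|\nn\|)^{-1}$ with $\delta=\min_i\abs{\dist(\pp_i,H)}$, whereas you work directly with the values of the linear forms; and you are slightly more explicit than the paper in tracking the optional $(\zero,-1)$ row.
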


\begin{proof}
  Assume that (i) holds. The orientation of $\nn$ is chosen such that
  $\nn\cdot\pp_1>c$ and $\nn\cdot\pp_i<c$ for $i\in\{2,\dots,s\}$. Define
  $\delta=\min_{i\in\{1,\dots,s\}}\abs{\dist(\pp_i,H)}>0$. Then
  \begin{eqnarray*}
    \pp_1\cdot\nn-c&\geq&\delta\|\nn\|,\\
    \pp_i\cdot\nn-c&\leq&-\delta\|\nn\|,\qquad i\in\{2,\dots, s\},
  \end{eqnarray*}
  and we can choose $\lambda=(\delta\|\nn\|)^{-1}$.

  Vice versa, assume that (ii) holds.
  It follows that
  \begin{eqnarray*}
    \pp_1\cdot\nn&\geq& c+1/\lambda,\\
    \pp_i\cdot\nn&\leq& c-1/\lambda,\qquad i\in\{2,\dots,s\}.
  \end{eqnarray*}
  Hence $H(\xx)$ defined by $\nn\xx=c$ is a hyperplane separating $\pp_1$ from
  $\Support(f)\setminus\{\pp_1\}$, where the distance between $H(\xx)$ and
  $\Support(f)$ is at least $\|\nn\|/\lambda>0$. Furthermore, $\nn$ is
  oriented as required in (i).
\end{proof}

\begin{lemma}\label{LE:equiv}
  Let $0\neq f\in\Z[\xx]$. Then the following are equivalent:
  \begin{enumerate}[(i)]
  \item There is $(\nn,c)\in\R^{d+1}$ s.t.~$N^*\cdot B(f)\cdot(\nn,c)^T\leq\uu$.
  \item There is $(\nn,c)\in\Q^{d+1}$ s.t.~$N^*\cdot B(f)\cdot(\nn,c)^T\leq\uu$.
  \item There is $\nn\in\Z^{d}$, $c\in\Q$ s.t.~$N^*\cdot B(f)\cdot(\nn,c)^T\leq\uu$.
  \end{enumerate}
\end{lemma}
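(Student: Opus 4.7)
The plan is straightforward since (iii) $\Rightarrow$ (ii) $\Rightarrow$ (i) are immediate from the inclusions $\Z \subseteq \Q \subseteq \R$; the real content lies in closing the loop.

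For (ii) $\Rightarrow$ (iii), I would take a rational witness $(\nn,c) \in \Q^{d+1}$ and let $\mu \in \N_1$ be a common denominator for the components of $\nn$, so that $\mu\nn \in \Z^d$ and $\mu c \in \Q$. Scaling gives $N^{*} B(f)(\mu\nn,\mu c)^T = \mu \cdot N^{*} B(f)(\nn,c)^T$, and since every entry of the unscaled product is at most $-1$ and $\mu \geq 1$, every entry of the scaled product is at most $-\mu \leq -1$. Hence $(\mu\nn, \mu c)$ is a witness of the form required in (iii).

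The substantive step is (i) $\Rightarrow$ (ii). Since the exponents $\pp_i$ lie in $\N^d$, we have $B(f) \in \Z^{m \times (d+1)}$ and $\uu \in \Z^m$, so the system $N^{*} B(f)(\nn,c)^T \leq \uu$ defines a rational polyhedron in $\R^{d+1}$; standard polyhedral theory yields that any nonempty rational polyhedron contains rational points, which closes the argument. If a self-contained derivation is preferred, I would route through Lemma~\ref{LE:one}: a real witness of (i) gives, by the (ii) $\Rightarrow$ (i) direction of that lemma, a real hyperplane $H$ strictly separating the integer point $\pp_1$ from the remaining finitely many integer points of $\Support(f)$. Because the separation is strict, a small rational perturbation of $(\nn,c)$ yields $(\nn',c') \in \Q^{d+1}$ defining a rational hyperplane with the same separation property and the same orientation of the normal. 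Applying the (i) $\Rightarrow$ (ii) direction of Lemma~\ref{LE:one} to $(\nn',c')$ gives some $\lambda > 0$ with $N^{*} B(f)(\lambda\nn',\lambda c')^T \leq \uu$; since $\pp_1\nn' - c'$ and the differences $c' - \pp_i\nn'$ for $i \geq 2$ are now positive rationals, any sufficiently large rational $\lambda$ works, producing a rational witness for (ii).

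The main obstacle is the rationalization step in (i) $\Rightarrow$ (ii): one must convert a real feasible point into a rational one. This is handled either by appealing to the general rationality theorem for polyhedra with integer coefficients, or, more constructively, by exploiting the slack built into the strict separation provided by Lemma~\ref{LE:one}(i), which allows both the hyperplane and the scaling factor $\lambda$ to be chosen rational.
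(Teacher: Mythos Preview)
Your proof is correct and follows the same overall cycle as the paper: the trivial inclusions, a rationality principle for (i)$\Rightarrow$(ii), and denominator-clearing for (ii)$\Rightarrow$(iii). Two differences are worth noting. First, your scaling $(\mu\nn,\mu c)$ in (ii)$\Rightarrow$(iii) is actually cleaner than the paper's choice $(m\nn,\,mc+m-1)$; both work, but yours exploits homogeneity more directly. Second, for (i)$\Rightarrow$(ii) the paper simply invokes the Linear Tarski Principle (quantifier elimination for linear formulas over ordered fields), whereas you cite the equivalent rational-polyhedron fact and, additionally, sketch a self-contained route through Lemma~\ref{LE:one} by rationally perturbing the strict separating hyperplane and then choosing a rational $\lambda$. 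That constructive alternative is not in the paper and is a genuine bonus: it avoids any black-box appeal to model-theoretic or polyhedral machinery, at the cost of reusing the geometric content of Lemma~\ref{LE:one}.
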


\begin{proof}
  The existence of a real solution in (i) and a rational solution in (ii)
  coincide due to the Linear Tarski Principle: Ordered fields admit quantifier
  elimination for linear formulas~\cite{LoosWeispfenning:93a}. Given a solution
  $(n_1,\dots,n_d,c)\in\Q^{d+1}$ in (ii), we can use the principal denominator
  $m\in\N_1$ of $n_1$, \dots,~$n_d$ to obtain a solution
  $(m n_1,\dots,m n_d,m c+m-1)\in\Z^d\times\Q$ in (iii). The implication from
  (iii) to (i) is trivial.
\end{proof}

\begin{lemma}\label{LE:becomepos}
  Let $f\in\Z[\xx]\setminus\Z$. Let $(\nn,c)\in\R^{d+1}$ such that
  $N^*\cdot B(f)\cdot(\nn,c)^T\leq\uu$. Then there is $a_0\in\N$ such that for
  all $a\in\N$ with $a\geq a_0$ the following hold:
  \begin{enumerate}[(i)]
  \item
    \begin{math}
      \displaystyle
      \abs{\coef(f,\pp_1)\cdot a^{\nn\pp_1}}>\abs{\sum_{i=2}^s\coef(f,\pp_i)\cdot a^{\nn\pp_i}},
    \end{math}
  \item $\sign\bigl(f(a^{\nn})\bigr)=\sign(f,\pp_1)$.
  \end{enumerate}
\end{lemma}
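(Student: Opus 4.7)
The plan is to unpack the matrix inequality $N^* \cdot B(f) \cdot (\nn,c)^T \leq \uu$ row by row, exactly as done just before Lemma~\ref{LE:one}. This yields $\nn\pp_1 - c \geq 1$, then $\nn\pp_i - c \leq -1$ for $i \in \{2,\dots,s\}$, and additionally $c \geq 1$ whenever $\zero \in \Support(f)$ (the contribution of the final row $(\zero,-1)$). Subtracting the first constraint from the remaining ones gives the crucial gap $\nn\pp_1 - \nn\pp_i \geq 2$ for every $i \geq 2$; this single fact drives the whole argument.

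For part~(i), I would combine the triangle inequality with the monotonicity of $x \mapsto a^x$ for $a \geq 1$:
\begin{displaymath}
\left|\sum_{i=2}^s \coef(f,\pp_i)\, a^{\nn\pp_i}\right| \leq \sum_{i=2}^s \abs{\coef(f,\pp_i)}\, a^{\nn\pp_i} \leq a^{\nn\pp_1-2}\sum_{i=2}^s \abs{\coef(f,\pp_i)}.
\end{displaymath}
Dividing the target inequality by $a^{\nn\pp_1} > 0$ then reduces~(i) to the elementary estimate $a^2 > \sum_{i=2}^s \abs{\coef(f,\pp_i)}/\abs{\coef(f,\pp_1)}$, which holds for every $a \in \N$ beyond some explicit threshold $a_0$.

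For part~(ii) I would distinguish on whether $\zero \in \Support(f)$. If $\Supportz(f) = \emptyset$, the claim is immediate from~(i), because $a^{\nn\pp_1} > 0$ and the dominating term $\coef(f,\pp_1)\,a^{\nn\pp_1}$ fixes the sign of the full sum $f(a^{\nn})$. If $\Supportz(f) \neq \emptyset$, the extra constant summand $\coef(f,\zero)$ appears in $f(a^{\nn})$ and is not covered by~(i). Here I would invoke the so far unused inequality $c \geq 1$, which together with $\nn\pp_1 \geq c+1$ forces $\nn\pp_1 \geq 2$; hence $a^{\nn\pp_1}$ tends to infinity and, after enlarging $a_0$ if necessary, dominates $\abs{\coef(f,\zero)}$ as well, so the sign of $f(a^{\nn})$ is again governed by the $\pp_1$-term.

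I expect no serious obstacle; the only mild subtlety sits in~(ii), where one must notice that the sum in~(i) ranges only over $\{2,\dots,s\}$ and therefore omits $\pp = \zero$, and recognize that the missing control is supplied precisely by the final row of $B(f)$ that was attached for this very purpose.
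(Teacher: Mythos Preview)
Your proof is correct and follows essentially the same route as the paper (gap between $\nn\pp_1$ and the remaining $\nn\pp_i$, monotonicity of $a\mapsto a^x$, triangle inequality), except that you extract the explicit gap $2$ directly from the matrix constraints instead of passing through Lemma~\ref{LE:one} to obtain an abstract $\delta>0$. Your case distinction in~(ii) on whether $\zero\in\Support(f)$ is in fact more careful than the paper's own argument, which tacitly treats~(i) as if it controlled all non-dominant terms and never separately addresses the constant summand.
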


\begin{proof}
  (i) From $f\notin\Z$ it follows that $\pp_1\neq\zero$. By Lemma~\ref{LE:one}
  we know $\nn\pp_1>c$ and $\nn\pp_i<c$ for $i\in\{2,\dots,s\}$. It follows that
  there is $0<\delta\in\R$ such that
  \begin{eqnarray}
    \nn\pp_1&\geq&c+\delta,\label{EQ:three_1}\\
    \nn\pp_i&\leq&c-\delta\quad i\in\{2,\dots,s\}.\label{EQ:three_2}
  \end{eqnarray}
  We are going to show that
  $a_0=\bigl\lceil\max\bigl\{2,
  \bigl(b\cdot(k-1)\bigr)^{\frac{1}{\delta}}\bigr\}\bigr\rceil$
  is a suitable choice, where
  \begin{displaymath}
    b=\abs{\coef(f,\pp_1)}^{-1}\cdot\max_{i\in\{2,\dots,s\}}\abs{\coef(f,\pp_i)}.
  \end{displaymath}

  For $a\geq a_0\geq2$ and for all $i\in\{2,\dots,s\}$, the inequalities
  (\ref{EQ:three_1}) and (\ref{EQ:three_2}) and monotony yield
  \begin{displaymath}
    a^{\nn\pp_1}\geq
    a^\delta a^c> a^\delta a^c a^{-\delta} \geq
    a^\delta a^{\nn\pp_i}\geq
    b\cdot(k-1)\cdot a^{\nn\pp_i}.\\
  \end{displaymath}
  Using the triangle inequality it follows that
  \begin{displaymath}
    a^{\nn\pp_1}>b\sum_{i=2}^s  a^{\nn\pp_i}
    \geq \abs{\coef(f,\pp_1)}^{-1}\cdot\abs{\sum_{i=2}^s\coef(f,\pp_i)\cdot a^{\nn\pp_i}},
  \end{displaymath}
  which straightforwardly implies
  \begin{displaymath}
    \abs{\coef(f,\pp_1)\cdot a^{\nn\pp_1}}>\abs{\sum_{i=2}^s\coef(f,\pp_i)\cdot a^{\nn\pp_i}}.
  \end{displaymath}

  (ii) It follows from (i) that for $a\geq a_0$ the sign of the monomial
  $\coef(f,\pp_1)\cdot a^{\nn\pp_1}$ determines the sign of $f(a^\nn)$. Since
  $a>0$, we obtain
  \begin{displaymath}
    \sign\bigl(f(a^{\nn})\bigr)= \sign\bigl(\coef(f,\pp_1)\cdot a^{\nn\pp_1}\bigr)=\sign(f,\pp_1).\qedhere
  \end{displaymath}
\end{proof}

\begin{algorithm}[t]
  \DontPrintSemicolon
  \SetAlgoVlined
  \LinesNumbered
  \SetNlSty{}{}{}
  \SetNlSkip{1em}
  \SetKwInOut{Input}{data}\SetKwInOut{Output}{result}
  \SetKwFunction{find}{find-positive}
  \SetKwFunction{solve}{lpsolve}
  \SetKwProg{fun}{function}{}{}
  \fun{\find{$f$}}{
    \Input{$f\in\Z[x_1,\dots,x_d]$}
    \Output{$\pp\in(\Q^+)^d$ or \texttt{"failed"}}
    \BlankLine
    $B^+:=B^+(f)$\;
    $B^-:=B^-(f)$\;
    $h := \texttt{\upshape "infeasible"}$\;
    \While{$h = \texttt{\upshape "infeasible"}$ and ${B^+}\neq {[~]}$}{
      $h := \solve{$B^+\circ B^-\circ(\zero,-1)^*$}$\;
      delete the first row from $B^+$\;
    }
    \If{$h = \texttt{\upshape "infeasible"}$}{
      \Return \texttt{"failed"}\;
    }
    $(\nn,c) := h$\;
    $t:=2$\;
    \While{$f(t^\nn)\leq0$}{
      $t := 2t$\;
    }
    \Return $t^\nn$\;}
  \BlankLine
  \fun{\solve{$B$}}{
    \Input{a matrix $B$ with $d+1$ columns}
    \Output{$(\nn,c)\in\Z^d\times\Q$ or \texttt{"infeasible"}}
    \BlankLine
    $\Pi := \text{LP problem given by $N^* B$ and $\uu$}$\;
    $h := \text{a solution $(\nn,c)\in\Q^{d+1}$ of $\Pi$ or \texttt{"infeasible"}}$\;
    \If{$h=\texttt{\upshape "infeasible"}$}{
      \Return \texttt{"infeasible"}\;
    }
    $m := \text{principal denominator of the coordinates of $\nn$}$\;
    $\nn := m\cdot \nn$\;
    $c := mc+m-1$\;
    \Return $(\nn,c)$\;
  }
  \caption{Functions \texttt{find-positive} and \texttt{lpsolve}\label{ALG:findpos}}
\end{algorithm}

After these preparations we can state our first subalgorithm as
Algorithm~\ref{ALG:findpos}.

\begin{theorem}[Correctness of \texttt{find-positive}]\label{TH:findposcorrect}
  Consider $f\in\Z[\xx]$. Then the following hold:
  \begin{enumerate}[(i)]
  \item The function \mbox{\tt find-positive} terminates.
  \item The function \mbox{\tt find-positive} returns either \mbox{\tt "failed"} or
    $\pp\in(\Q^+)^d$ with $f(\pp)>0$.
  \end{enumerate}
\end{theorem}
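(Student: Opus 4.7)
I would split the statement into termination (i) and correctness (ii) and carry them out jointly, because the correctness argument also produces the termination witness for the second loop. Termination of the outer \texttt{while} loop is immediate: each iteration strictly shrinks $B^+$, so it runs at most $|\Supportp(f)|$ times. If the algorithm exits at the \texttt{"failed"} branch there is nothing more to show, so I assume \texttt{lpsolve} returns some $h=(\nn,c)\in\Z^d\times\Q$ on exit.

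The central bookkeeping step is to interpret what the LP actually encodes after some candidates have already been rejected. Denote the rows deleted from $B^+$ during the loop by $\pp_1,\dots,\pp_{k-1}$, so that $\pp_k\in\Supportp(f)$ is the first row of $B^+$ in the successful call. By Lemma~\ref{LE:one}, each preceding failed call certifies that $\pp_j$ cannot be strictly separated from the then-current remaining support, so $\pp_j$ is not a vertex of the convex hull of $\Support(f)\setminus\{\pp_1,\dots,\pp_{j-1}\}$. Iterating Lemma~\ref{LE:newton}, I conclude that the polynomial
\begin{displaymath}
  f'=f-\sum_{j=1}^{k-1}\coef(f,\pp_j)\cdot\xx^{\pp_j}
\end{displaymath}
has the same Newton polytope as $f$, that the matrix passed to the successful LP coincides with $B(f')$ up to the row order within its positive and negative blocks, and that $f'\notin\Z$ since $\pp_k\in\Supportp(f')$ is nonzero.

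With this reduction in hand I apply Lemma~\ref{LE:becomepos}(ii) to $f'$ and $(\nn,c)$. It yields an $a_0\in\N$ such that $\sign(f'(a^\nn))=\sign(f',\pp_k)>0$ for every integer $a\geq a_0$. Since each deleted coefficient $\coef(f,\pp_j)$ is strictly positive and $(a^\nn)^{\pp_j}>0$ for any $a>0$ and any $\nn\in\Z^d$, the identity
\begin{displaymath}
  f(a^\nn)=f'(a^\nn)+\sum_{j=1}^{k-1}\coef(f,\pp_j)\cdot(a^\nn)^{\pp_j}
\end{displaymath}
gives $f(a^\nn)>0$ for every $a\geq a_0$. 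The sequence $t=2,4,8,\dots$ reaches $a_0$ in finitely many doubling steps, which simultaneously establishes termination of the inner \texttt{while} loop and shows that the returned vector $t^\nn\in(\Q^+)^d$ satisfies $f(t^\nn)>0$.

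The step I expect to be the main obstacle is precisely the identification of the successful LP with the basic LP matrix of the auxiliary polynomial $f'$: one must notice that each intermediate infeasibility propagates a Newton-polytope-preserving deletion via Lemma~\ref{LE:newton}, and that the positive sign of every deleted coefficient lets the conclusion for $f'$ transfer to $f$ for free. Once these two observations are in place, the rest reduces to a direct application of Lemma~\ref{LE:becomepos} and an elementary monotonicity argument for the doubling loop.
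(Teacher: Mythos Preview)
Your proof is correct and tracks the paper's overall structure closely: both establish the loop invariant $\Newton(f')=\Newton(f)$ by combining Lemma~\ref{LE:one} (each infeasible call certifies that the candidate is not a vertex) with Lemma~\ref{LE:newton}, both identify the matrix in the successful call with $B(f')$, and both close the inner loop through Lemma~\ref{LE:becomepos} and the doubling $t=2,4,8,\dots$.

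The one genuine difference is in how Lemma~\ref{LE:becomepos} is invoked. The paper asserts that the feasible regions of $N^*\cdot B(f')\cdot\vv\le\uu$ and of $N^*\cdot B(f)\cdot\vv\le\uu$ (with $\pp_k$ moved to the first row) are identical, and then applies Lemma~\ref{LE:becomepos} directly to the original $f$. You instead apply Lemma~\ref{LE:becomepos} to the truncated polynomial $f'$ and recover $f(a^{\nn})>0$ from
\[
f(a^{\nn})=f'(a^{\nn})+\sum_{j=1}^{k-1}\coef(f,\pp_j)\,(a^{\nn})^{\pp_j},
\]
using that every deleted coefficient is positive and every $(a^{\nn})^{\pp_j}$ is positive. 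Your route is marginally longer but more robust: it never requires the deleted exponent vectors $\pp_j$ to satisfy the quantitative margin $\nn\pp_j-c\le-1$, a constraint the reduced LP does not enforce and which is precisely what the paper's ``identical feasible regions'' claim would need. The positivity of the discarded monomials makes that issue disappear entirely.
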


\begin{proof}
  (i) The termination of \texttt{lpsolve} follows from the existence of
  terminating algorithms for linear programming in line~15, including the
  Simplex algorithm~\cite{Dantzig:63a}, the ellipsoid
  method~\cite{Khakhiyan:79a}, and the interior point
  method~\cite{Karmakar:84a}. For the function \texttt{find-positive} itself,
  the number of iterations of the while-loop in line~3 is bounded by the number
  of rows of $B^+$, which is in turn bounded by the finite cardinality of
  $\Support(f)$. The termination of the while-loop in line~11 will be discussed
  with the correctness in (ii).

  (ii) To start with, the subroutine \texttt{lpsolve} solves the LP problem
  $\Pi$ defined in line 14 and, in the feasible case, $(\nn,c)$ in line 15 is a
  feasible point in $\Q^{d+1}$. The return value $(\nn,c)\in\Z^d\times\Q$ in
  line 21 is a feasible point for $\Pi$ as well. Its construction in lines
  18--20 follows the proof step from (ii) to (iii) in Lemma~\ref{LE:equiv}.

  Next, the while-loop in line 3 has the following loop invariants. Consider
  $$f_{(n)}=f-\sum_{i=1}^{n-1}\coef(f,\pp_i)\xx^{\pp_i}$$ before the $n$-th
  iteration:
  \begin{enumerate}[($\textrm{I}_1$)]
  \item $\Newton(f_{(n)})=\Newton(f)$,
  \item $B\bigl(f_{(n)}\bigr)=B_{(n)}^+\circ B^-\circ(\zero,-1)^*$.
  \end{enumerate}
  Invariant ($\textrm{I}_2$) is easy to see. Consider ($\textrm{I}_1$). For
  $n=1$ this is trivial. Before the $n+1$-st iteration we know that
  $h=\texttt{"infeasible"}$, which means that the LP problem given by
  \begin{displaymath}
    N^*\cdot\bigl((\pp_n)\circ B^+_{(n+1)}\circ B^-\circ(\zero,-1)^*\bigr)\quad\text{and}\quad\uu
  \end{displaymath}
  was infeasible at the $n$-th iteration. According to Lemma~\ref{LE:one} it
  follows that $\pp_n\notin\Newton(f_{(n)})$. Using Lemma~\ref{LE:newton} and
  the induction hypothesis we conclude
  $$\Newton(f_{(n+1)})=\Newton(f_{(n)}-\coef(f,\pp_n)\xx^{\pp_n})=\Newton(f_{(n)})=\Newton(f).$$

  The function \texttt{find-positive} has two possible exit points at lines 8
  and 13 corresponding to its two possible return values. Assume we are in line
  13. We have to show that $t^\nn\in(\Q^+)^d$ with $f(t^\nn)>0$. The while-loop
  in line 3 has terminated after $n$ iterations, and the if-condition in line 7
  is false. In line 9 we know by ($\textrm{I}_1$), ($\textrm{I}_2$), and
  Lemma~\ref{LE:one} that the feasible regions for
  $N^*\cdot(B^+\circ B^-\circ(\zero,-1)^*)\cdot\vv\leq\uu$ and
  $N^*\cdot B(f)\cdot\vv\leq\uu$ are identical. This allows us to use
  ${B^+\circ B^-\circ(\zero,-1)^*}$ instead of $B(f)$ for applying
  Lemma~\ref{LE:becomepos} to the original $f$, and our $\nn\in\Z^d$ has the
  property described there. In line 11, at the beginning of the $k$-th iteration
  of the while-loop we have $t=2^k$. By Lemma~\ref{LE:becomepos} we know that we
  will eventually have $t\geq a_0$ and thus $\sign(f(t^\nn))=\sign(f,\pp_n)>0$.
\end{proof}

\subsection{Finding a Zero}
We have discussed how to heuristically find $\pp\in(\Q^+)^d$ such that
$f(\pp)>0$ for our given $f\in\Z[\xx]$. On that basis Algorithm~\ref{ALG:findz}
computes $\zz\in(\bar\Q^+)^d$ such that $f(\zz)=0$, where $\bar\Q$ denotes the
algebraic closure of $\Q$.
\begin{algorithm}[t]
    \DontPrintSemicolon
  \SetAlgoVlined
  \LinesNumbered
  \SetNlSty{}{}{}
  \SetNlSkip{1em}
  \SetKwInOut{Input}{data}\SetKwInOut{Output}{result}
  \SetKwFunction{find}{find-zero}
  \SetKwFunction{findp}{find-positive}
  \SetKwFunction{compz}{construct-zero}
  \SetKwProg{fun}{function}{}{}
  \fun{\find{$f$}}{
    \Input{$f\in\Z[x_1,\dots,x_d]$}
    \Output{$\zz\in(\Q^+)^d$ or \texttt{"failed"}}
    \BlankLine

    $y:=f(\mathbf{1})$\;
    \If{$y=0$}{
      \Return $\mathbf{1}$\;}
    \If{$y>0$}{
      $f := -f$\;}
    $\pp := \findp{$f$}$\;
    \If{$\qq=\mbox{\tt "failed"}$}{
      \Return \texttt{"failed"}\;}
    $\zz := \compz{$f$, $\pp$, $\mathbf{1}$}$\;
    \Return $\zz$\;
  }
  \BlankLine
  
  \fun{\compz{$f$, $\pp$, $\qq$}}{
    \Input{$f\in\Z[x_1,\dots,x_d]$, $\pp$, $\qq\in\Q^d$}
    \Output{$\zz\in\Q^d$ or \texttt{"failed"}}
    \BlankLine

    $\bb := \pp + y\cdot(\qq-\pp)$, where $y$ is a new variable\;
    $g := f(\bb)$\;
    isolate $r\in{]0,1[}$ with $g(r)=0$\;
    $\zz := \bb(r)$\;
    \Return $\zz$\;
  }
  \caption{Functions \texttt{find-zero} and \texttt{construct-zero}\label{ALG:findz}}
\end{algorithm}

\begin{lemma}[Correctness of \texttt{construct-zero}]\label{LE:construct-zero}
  Consider $f\in\Z[\xx]$, and let $\pp$, $\qq\in\Q^d$ such that
  $f(\pp)f(\qq)<0$. Then the following hold:
  \begin{enumerate}[(i)]
  \item The function \mbox{\tt construct-zero} terminates.
  \item The function \mbox{\tt construct-zero} returns either \mbox{\tt
      "failed"} or $\zz\in\bar\Q^d$ with $f(\zz)=0$. If $\pp$, $\qq\in(\Q^+)^d$,
    then $\zz\in(\bar\Q^+)^d$.
  \end{enumerate}
\end{lemma}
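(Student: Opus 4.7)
The plan is to reduce to the univariate case. Introduce the parametrization $\bb(y) = \pp + y(\qq-\pp) = (1-y)\pp + y\qq$ of the line segment from $\pp$ to $\qq$, and let $g(y) = f(\bb(y)) \in \Q[y]$, a univariate polynomial of degree at most $\deg(f)$. The crucial observation is that $g(0) = f(\pp)$ and $g(1) = f(\qq)$, so the hypothesis $f(\pp)f(\qq) < 0$ yields $g(0)\,g(1) < 0$.

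For part~(i), all steps of \texttt{construct-zero} outside of the root-isolation line amount to elementary symbolic arithmetic on polynomials and rationals. For the isolation step itself, the intermediate value theorem applied to the continuous function $g$ on $[0,1]$ guarantees at least one real zero of $g$ in the open interval ${]0,1[}$. Real-root isolation over $\Q[y]$ is classical and admits several terminating realizations (Sturm sequences, Descartes' rule of signs combined with bisection, and so on), so the call terminates and delivers an algebraic number $r \in {]0,1[}$ with $g(r) = 0$; the output ``\texttt{failed}'' simply never occurs under the present hypothesis.

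For part~(ii), once such $r$ has been obtained, $\zz = \bb(r)$ is a vector whose coordinates are affine expressions in $r$ with rational coefficients. Substituting an algebraic value into a polynomial over $\Q$ yields an algebraic value, so $\zz \in \bar\Q^d$, and by construction $f(\zz) = f(\bb(r)) = g(r) = 0$. For the positivity addendum, assume in addition that $\pp, \qq \in (\Q^+)^d$. Since $r$ is a real number with $0 < r < 1$, both $1-r$ and $r$ are strictly positive reals, so each coordinate $z_j = (1-r)\,p_j + r\,q_j$ is a strictly positive algebraic number and thus $\zz \in (\bar\Q^+)^d$.

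No single step poses a genuine obstacle; the only non-bookkeeping ingredient is the availability of a terminating real-root isolation procedure for univariate rational polynomials, which is a routine building block of computer algebra systems. The proof is therefore almost entirely a matter of correctly packaging the intermediate value theorem together with that isolation routine.
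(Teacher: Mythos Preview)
Your proof is correct and follows essentially the same approach as the paper: parametrize the segment $\overline{\pp\qq}$ by $\bb(y)=\pp+y(\qq-\pp)$, reduce to the univariate polynomial $g=f\circ\bb$, invoke the intermediate value theorem plus a terminating root-isolation routine to obtain $r\in{]0,1[}$, and then use convexity of the segment for the positivity addendum. Your explicit convex-combination formula $z_j=(1-r)p_j+rq_j$ for the positivity step is slightly more detailed than the paper's appeal to $\zz\in\overline{\pp\qq}$, but the argument is the same.
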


\begin{proof}
  (i) The termination of \texttt{construct-zero} follows from the existence of
  terminating algorithms for univariate real root isolation including Sturm
  sequences~\cite{Sturm:1835} and more efficient algorithms
  \cite{CollinsAkritas:76a,AkritasStrzebonski:05a} based on Vincent's
  Theorem~\cite{Vincent:36a}.

  (ii) Since $f$ is continuous and $f(\pp)f(\qq)<0$, the intermediate value
  theorem guarantees the existence of $\zz\in\overline{\pp\qq}$ with $f(\zz)=0$.
  Formally, $\zz\in\bar\Q^d$ is a solution for $\xx$ of the following nonlinear
  system with indeterminates $x_1$, \dots,~$x_d$,~$y$:
  \begin{eqnarray}
    f&=&0\label{EQ:f}\\
    x_1 & = & p_1 + y\cdot(q_1-p_1)\label{EQ:x1}\\
    &\vdots&\notag\\
    x_d & = & p_d + y\cdot(q_d-p_d)\label{EQ:xd}\\
    y &>& 0\label{EQ:y0}\\
    y &<& 1\label{EQ:y1}.
  \end{eqnarray}
  In line 11, $\bb$ is assigned the vector of the right hand sides of the $d$
  equations~(\ref{EQ:x1})--(\ref{EQ:xd}). In line 12, these are plugged into the
  left hand side of equation~(\ref{EQ:f}) yielding a nonlinear univariate
  polynomial equation in $y$. Using any of the methods mentioned in (i), we
  obtain in line 13 a solution $r\in\bar\Q$ for $y$ of that equation subject to
  the constraints~(\ref{EQ:y0})--(\ref{EQ:y1}). That solution $r$ is a real
  algebraic number in some suitable representation~\cite{Mishra:93a}. In line 14
  we substitute $r$ back into the equations~(\ref{EQ:x1})--(\ref{EQ:xd}) to
  finally obtain $\xx=\zz\in\bar\Q$, also as a real algebraic number.

  Since $\zz\in\overline{\pp\qq}$, it follows from $\pp$, $\qq\in(\Q^+)^d$ that
  also $\zz\in(\Q^+)^d$.
\end{proof}

On the basis of Lemma~\ref{LE:construct-zero} the following theorem is
straightforward.

\begin{theorem}[Correctness of \texttt{find-zero}]
  Let $f\in\Z[\xx]$. Then the function \mbox{\tt find-zero} terminates and
  returns either \mbox{\tt "failed"} or $\zz\in(\bar\Q^+)^d$ with
  ${f(\zz)=0}$.\qed
\end{theorem}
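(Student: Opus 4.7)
The plan is to reduce everything to the two correctness results already established, namely Theorem~\ref{TH:findposcorrect} for \texttt{find-positive} and Lemma~\ref{LE:construct-zero} for \texttt{construct-zero}. Termination of \texttt{find-zero} is then immediate: the function performs finitely many arithmetic operations plus exactly one call to each of these two subroutines, both of which terminate. No loops, no recursion.

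For the return-value claim I would split along the possible exit points. If $y=f(\mathbf{1})=0$, then \texttt{find-zero} returns $\mathbf{1}\in(\Q^+)^d\subseteq(\bar\Q^+)^d$, which is trivially a root. If the call to \texttt{find-positive} returns \texttt{"failed"}, then \texttt{find-zero} propagates \texttt{"failed"} and there is nothing to show. In the remaining case we reach the call to \texttt{construct-zero}, and the task is to verify its precondition $f(\pp)f(\qq)<0$ with $\qq=\mathbf{1}$.

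To see this, observe that after the conditional assignment $f:=-f$ in line~5 of \texttt{find-zero}, we have $f(\mathbf{1})<0$: the replacement occurs exactly when the original $y$ is positive, while $y=0$ has already been handled by the early return, and the case $y<0$ leaves $f$ untouched. By Theorem~\ref{TH:findposcorrect}(ii), \texttt{find-positive}$(f)$, when not failing, returns some $\pp\in(\Q^+)^d$ with $f(\pp)>0$. Hence $f(\pp)\cdot f(\mathbf{1})<0$, and both points lie in $(\Q^+)^d$, so Lemma~\ref{LE:construct-zero} applies and yields $\zz\in(\bar\Q^+)^d$ with $f(\zz)=0$ (or \texttt{"failed"}, which is faithfully returned).

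The only subtlety worth spelling out is that the optional sign flip $f\mapsto -f$ preserves the real zero set, so a root $\zz$ of the internally modified polynomial is also a root of the original input; this is what justifies ignoring the flip in the final statement. Everything else is routine bookkeeping of the preconditions of the two invoked results, and I expect no real obstacle.
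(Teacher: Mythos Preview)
Your proof is correct and follows exactly the approach the paper intends: the paper itself gives no explicit proof, stating only that the theorem is ``straightforward'' on the basis of Lemma~\ref{LE:construct-zero}, and your argument is precisely the routine verification of that claim via Theorem~\ref{TH:findposcorrect} and Lemma~\ref{LE:construct-zero}. Your handling of the sign flip and the precondition $f(\pp)f(\mathbf{1})<0$ is exactly what is needed.
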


When one is interested only in the \emph{existence} of a zero of $f$, then one
can, in the positive case, obviously skip \texttt{construct-zero} and exit from
\texttt{find-zero} after line 8. Notice that, in addition, one can then also
exit early from \texttt{find-positive} after line 8 in
Algorithm~\ref{ALG:findpos}.

\subsection{An Illustrating Example}
\begin{figure}[t]
  \begin{center}
    \includegraphics[width=0.65\textwidth]{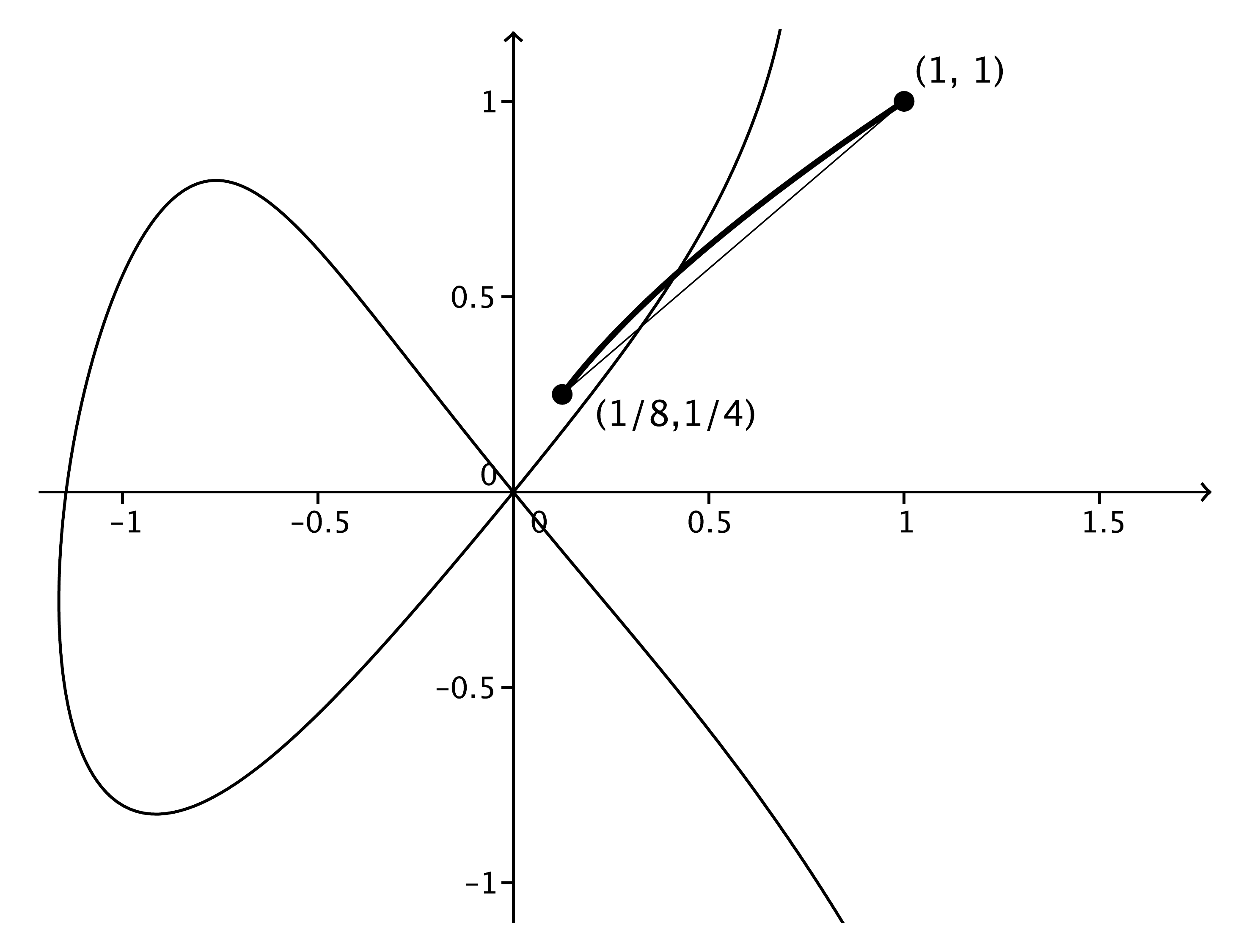}
  \end{center}
  \caption{The variety of $f=-2x_1^5+x_1^2x_2-3x_1^2-x_2^3+2x_2^2$ and the
    segment given by $t\in{[0,2]}$ of the moment curve $(t^{-3}, t^{-2})$
    corresponding to the normal vector $(-3,-2)$ of the separating hyperplane in
    Figure~\ref{FIG:tropical}.\label{FIG:real}}
\end{figure}
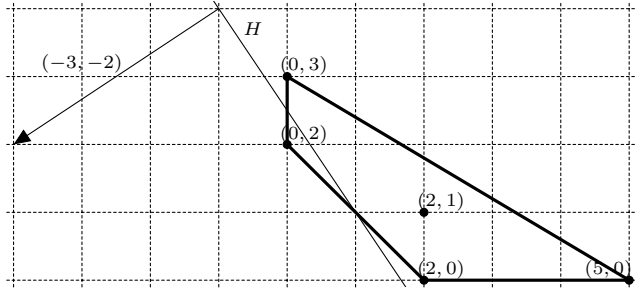
\begin{figure}[t]
  \begin{center}
    \begin{tikzpicture}[line cap=round,line join=round,>=triangle 45,x=0.9cm,y=0.9cm]
      \draw [dash pattern=on 1pt off 1pt, xstep=0.9cm,ystep=0.9cm] (-4.1,-0.1) grid (5.1,4.1);
      \clip(-4.1,-0.1) rectangle (5.1,4.1);
      \draw [line width=1.2pt] (0.,3.)-- (0.,2.);
      \draw [line width=1.2pt] (0.,2.)-- (2.,0.);
      \draw [line width=1.2pt] (2.,0.)-- (5.,0.);
      \draw [line width=1.2pt] (5.,0.)-- (0.,3.);
      \draw [domain=-4.1:5.1] plot(\x,{(-5.--3.*\x)/-2.});
      \draw [->] (-1.,4.) -- (-4.,2.);
      \begin{scriptsize}
        \draw [fill=black] (5.,0.) circle (1.5pt);
        \draw[color=black] (4.7,0.157286286293) node {$(5, 0)$};
        \draw [fill=black] (2.,1.) circle (1.5pt);
        \draw[color=black] (2.24926989947,1.15512037962) node {$(2, 1)$};
        \draw [fill=black] (2.,0.) circle (1.5pt);
        \draw[color=black] (2.24926989947,0.157286286293) node {$(2, 0)$};
        \draw [fill=black] (0.,3.) circle (1.5pt);
        \draw[color=black] (0.242390093782,3.16200018531) node {$(0, 3)$};
        \draw [fill=black] (0.,2.) circle (1.5pt);
        \draw[color=black] (0.242390093782,2.15295447296) node {$(0, 2)$};
        \draw[color=black] (-0.5,3.7) node {$H$};
        \draw[color=black] (-3,3.2) node {$(-3, -2)$};
      \end{scriptsize}
    \end{tikzpicture}
  \end{center}
  \caption{A subtropical view on ${f=-2x_1^5+x_1^2x_2-3x_1^2-x_2^3+2x_2^2}$ from
    Figure~\ref{FIG:real}. We see a hyperplane separating $(0,2)\in\Newton(f)$
    from $\Support(f)\setminus\{(0,2)\}$ together with its normal vector
    ${\nn=(-3,-2)}$.\label{FIG:tropical}}
\end{figure}
Consider $f=-2x_1^5+x_1^2x_2-3x_1^2-x_2^3+2x_2^2\in\Z[x_1,x_2]$. We apply
\texttt{find-zero} to find a point on the variety of $f$. Figure~\ref{FIG:real}
pictures the variety. We obtain $f(1,1)=-3<0$, and apply \texttt{find-positive}
to $f$.

Figure~\ref{FIG:tropical} pictures the support of $f$ and indicates the Newton
polytope. We split into $\Supportp(f)=\{(2,1),(0,2)\}$,
$\Supportm(f)=\{(2,0),(5,0),(0,3)\}$, and $\Supportz(f)=\emptyset$, and we
construct
\begin{displaymath}
  B^+=\begin{bmatrix*}
      2 & 1 & -1\\
      0 & 2 & -1
    \end{bmatrix*},\quad
  B^-=\begin{bmatrix*}
      2 & 0 & -1\\
      5 & 0 & -1\\
      0 & 3 & -1
    \end{bmatrix*}.
\end{displaymath}
Our first LP problem
\begin{displaymath}
  \begin{bmatrix*}[r]
      -2 & -1 & 1\\
      0 & 2 & -1\\
      2 & 0 & -1\\
      5 & 0 & -1\\
      0 & 3 & -1
    \end{bmatrix*}\cdot(\nn,c)^T\leq
  \begin{bmatrix*}
    -1\\
    -1\\
    -1\\
    -1\\
    -1\\
  \end{bmatrix*}
\end{displaymath}
is infeasible, which confirms the observation in Figure~\ref{FIG:tropical} that
$(2,1)\notin\Newton(f)$. Our next LP problem
\begin{displaymath}
  \begin{bmatrix*}[r]
      0 & -2 & 1\\
      2 & 0 & -1\\
      5 & 0 & -1\\
      0 & 3 & -1
    \end{bmatrix*}\cdot(\nn,c)^T\leq
  \begin{bmatrix*}
    -1\\
    -1\\
    -1\\
    -1\\
  \end{bmatrix*}
\end{displaymath}
is feasible with $\nn=(-3,-2)$ and $c=5$. Figure~\ref{FIG:tropical} shows the
corresponding hyperplane $H$ given by $-3x_1-2x_2+5=0$. It strictly separates
$(0,2)\in\Newton(f)$ from $\Support(f)\setminus\{(0,2)\}$, and its normal vector
$\nn=(-3,-2)$ is oriented towards $(0,2)$. We now know that $f(t^{-3},t^{-2})>0$
  for sufficiently large positive $t$. In fact, already
  \begin{displaymath}
    f(2^{-3},2^{-2})=f\left(\frac18,\frac14\right)=\frac{1087}{16384}.
\end{displaymath}
The relevant part of the moment curve $(t^{-3},t^{-2})$ for $t\in[1,2]$ is
pictured in Figure~\ref{FIG:real}. Since both coordinates of $\nn$ happen to be
negative, the curve will for $t\to\infty$ not extend to infinity but converge to
the origin. In particular, the curve will not leave the sign invariant region
containing $\left(\frac18,\frac14\right)$.
  
Finally, we call \texttt{construct-zero} with $\left(\frac18,\frac14\right)$ and
$(1,1)$ to solve the system
\begin{eqnarray*}
  -2x_1^5+x_1^2x_2-3x_1^2-x_2^3+2x_2^2&=&0\\
  x_1&=&\textstyle\frac18+y\cdot\left(1-\frac18\right)\\
  x_2&=&\textstyle\frac14+y\cdot\left(1-\frac14\right)\\
  y&>&0\\
  y&<&1.
\end{eqnarray*}
Dropping a positive integer denominator, we obtain the univariate polynomial
\begin{displaymath}
  \textstyle
  \bar g=-{16807} y^{5} - {12005} y^{4} - {934}
  y^{3} - {20778} y^{2} + {285} y + {1087}
\end{displaymath}
and an isolating interval $y\in{]0.2,0.3[}$. Substitution of the real algebraic
number $\bigl\langle\bar g,{]0.2,0.3[}\bigr\rangle$ into the equations for $x_1$
and $x_2$ yields an exact solution
\begin{eqnarray*}
  x_1&=&\bigl\langle 686x^5 - 78x^3 + 584x^2 - 150x - 13,{]0.32,0.33[}\bigr\rangle,\\
  x_2&=&\bigl\langle 16807x^5 - 12005x^4+ 2026x^3 + 9122x^2 - 4609x +
         323,{]0.42,0.43[}\bigr\rangle,
\end{eqnarray*}
where the intervals can, of course, be refined to arbitrary precision.
Geometrically, our solving has intersected the variety with the line segment
connecting the end points of our moment curve segment, which is also indicated
in Figure~\ref{FIG:real}.

\subsection{Why Strictly Positive Coordinates? }
In the present section, we have focused on roots with strictly positive
coordinates. This not only slightly simplifies the presentation. In fact, it is
an important feature of our algorithm to be able to perform such a directed
search.

To start with, the research presented here was originally motivated by questions
on the stability of chemical and biological reaction networks, where the
variables of the models typically are strictly positive. Our practical
computations in Section~\ref{SE:computations} are taken from those areas. For
details on the theoretical background we refer the reader to
\cite{DBLP:conf/ab/BoulierLLMU07,SturmWeber:08a,SturmWeber:09a,WeberSturm:11a,ErramiSeiler:11a,ErramiEiswirth:13a}.

Furthermore, the concept of positive feasible points is well-known from linear
programming. Techniques used there can be straightforwardly transfered to our
situation: Consider ${f\in\Z[x_1,\dots,x_d]}$. For finding zeros
$(z_1,\dots,z_d)$ with $\sign(z_i)=s_i\in\{-1,1\}$ consider
$f(s_1x_1,\dots,s_dx_d)$, for $z_1\in{]\alpha,\infty[}$ consider
$f(x_1+\alpha,x_2,\dots,x_d)$, for $z_1\in{]-\infty,\beta[}$ consider
$f(-x_1-\beta,x_2,\dots,x_d)$, and for $x_1$ unbounded consider
$f(x_1-x_1',x_2,\dots,x_d)$ introducing a new variable $x_1'$.

\section{Finding Arbitrary Roots}\label{SE:arbitrary}
\subsection{Using a Transformation}
Consider ${f\in\Z[x_1,\dots,x_d]}$. At the end of the previous section we have
addressed a technique for turning a real feasibility test based on positive
variables into a general one. Using the observation that every real number is a
difference of two positive real numbers, on introduces additional variables
$x_1'$, \dots,~$x_d'$ and transforms $f$ into $f(x_1-x_1',\dots,x_d-x_d')$. That
transformation is ubiquitous in linear programming, if not explicitly then
implicitly within the solvers.

From an efficiency point of view our procedure is clearly dominated by the LP
solving steps, where we have $d+1$ variables and $\abs{\Support(f)}$ many
constraints. Thinking in terms of state-of-the-art LP
solvers~\cite{Gurobi-Optimization-Inc.:14a,Makhorin:14a} and the Simplex method
with the option of dualization~\cite{Beale:54a,Lemke:54a}, the crucial
complexity parameter is $\min\{d,\abs{\Support(f)}\}$. With our considered
transformation the cardinality of the support increases exponentially in $d$ in
the worst case, but the number of variables only doubles from $d$ to $2d$.

Recall that our incomplete method relies on finding some $\pp\in\Newton(f)$ with
$\coef(f,\pp)>0$. We would like to point out that, doubling the dimension $d$
with the transformation, the ratio $\abs{\Newton(f)}/\abs{\Support(f)}$ will in
general increase for geometric reasons~\cite{Dwyer:88a}. Furthermore the
exponential increase of $\abs{\Support(f)}$ increases the absolute number of
candidates for a suitable $\pp$. On the other hand, the transformation does not
add points to $\Support(f)$ but exchanges it entirely. It would require either
comprehensive empirical studies or a thorough average-case analysis to make a
precise statement about the quality of the transformation in terms of
incompleteness.

\subsection{A Genuine Generalization}
We are now going to describe a generalization of the function
\texttt{find-positive} in Algorithm~\ref{ALG:findpos}, which searches for a
suitable $\pp$ not only in $\Supportp(f)$ but also in a subset of
$\Supportm(f)$. Recall that in case of success, \texttt{find-positive}
identifies the first row of $B^+$ as corresponding to the exponent vector of a
monomial with positive coefficient that dominates $f$ in the sense of
Lemma~\ref{LE:becomepos}. Then it constructs a point $\pp$ with large suitably
balanced positive coordinates. The key idea for our generalization is the
following: If the coefficient of an otherwise suitable monomial is negative but
there is at least one odd exponent in the exponent vector, then we can correct
the ``wrong'' sign of the coefficient by replacing the respective coordinate in
the constructed point $\pp$ with its additive inverse.

For $\pp=(p_1,\dots,p_d)\in\Support(f)$ define the \emph{minimal
  odd coordinate}
\begin{displaymath}
  \moc(\pp)=\min\bigl\{\,i\in\{1,\dots,d\bigr\}\bigm\vert 2\nmid p_i\bigr\},
\end{displaymath}
where $\min\emptyset=\infty$. We use the minimal odd coordinate to partition
$\Support^-(f)=\Supportwm(f)\mathbin{\dot\cup}\Supportsm$, where
\begin{eqnarray*}
  \Supportwm(f)&=&\{\,\pp\in\Supportm\mid\moc(\pp)<\infty\,\},\\
  \Supportsm(f)&=&\{\,\pp\in\Supportm\mid\moc(\pp)=\infty\,\}.
\end{eqnarray*}
The elements of $\Supportwm(f)$ are called \emph{weakly negative}. They have at
least one odd coordinate. The elements of $\Supportsm(f)$ are called
\emph{strongly negative}. They have exclusively even coordinates. We furthermore
define $B^{w-}(f)$ and $B^{s-}(f)$ corresponding to $\Supportwm(f)$ and
$\Supportsm(f)$, respectively, and we obtain
\begin{displaymath}
  B(f)=B^+(f)\circ B^{w-}(f)\circ B^{s-}(f)\circ (\zero,-1)^*.
\end{displaymath}
Consider a matrix $B$ obtained from $B(f)$ by deleting rows. Then we define
$\moc(B)=\moc(B_{11},\dots,B_{1d})$, i.e., the minimal odd coordinate of
$\pp\in\Support(f)$ corresponding to the first row of $B$.

\begin{algorithm}[t]
  \DontPrintSemicolon
  \SetAlgoVlined
  \LinesNumbered
  \SetNlSty{}{}{}
  \SetNlSkip{1em}
  \SetKwInOut{Input}{data}\SetKwInOut{Output}{result}
  \SetKwFunction{find}{find-positive-general}
  \SetKwFunction{solve}{lpsolve}
  \SetKwProg{fun}{function}{}{}
  \fun{\find{$f$}}{
    \Input{$f\in\Z[x_1,\dots,x_d]$}
    \Output{$\pp\in\Q^d$ or \texttt{"failed"}}
    \BlankLine
    $B:=B^+(f)\circ B^{w-}(f)$\;
    $B^{s-}:=B^{s-}(f)$\;
    $\mu := \infty$\;
    $h := \texttt{\upshape "infeasible"}$\;
    \While{$h = \texttt{\upshape "infeasible"}$ and $B\neq {[~]}$}{
      \If{the first row of $B$ is in $B^{w-}(f)$}{
        $\mu:=\moc(B)$
      }
      $h := \solve{$B^+\circ B\circ(\zero,-1)^*$}$\;
      delete the first row from $B$\;
    }
    \If{$h = \texttt{\upshape "infeasible"}$}{
      \Return \texttt{"failed"}\;
    }
    $(n_1,\dots,n_d,c) := h$\;
    $t:=2$\;
    $(t_1\dots,t_d):=(t^{n_1},\dots,t^{n_d})$\;
    \If{$\mu<\infty$}{
      $t_\mu:=-t_\mu$\;
    }
    \While{$f(t_1,\dots,t_d)\leq0$}{
      $t:=2t$\;
      $(t_1\dots,t_d):=(t^{n_1},\dots,t^{n_d})$\;
      \If{$\mu<\infty$}{
        $t_\mu:=-t_\mu$\;
      }
    }
    \Return $(t_1,\dots,t_d)$\;
  }
  \caption{Function \texttt{find-positive-general}\label{ALG:findposgen}}
\end{algorithm}

After these preparations we can state our function
\texttt{find-positive-general} in Algorithm~\ref{ALG:findposgen}. A
corresponding function \texttt{find-zero-general} is obtained by replacing in
\texttt{find-zero} in Algorithm~\ref{ALG:findz} the call to
\texttt{find-positive} with a call to \texttt{find-positive-general}. Everything
else remains unchanged.

For showing the correctness of \texttt{find-positive-general} we are going to
use the following variant of Lemma~\ref{LE:becomepos}:

\begin{lemma}\label{LE:becomeposgen}
  Let $f\in\Z[\xx]\setminus\Z$. Let $(\nn,c)\in\R^{d+1}$ such that
  $N^*\cdot B(f)\cdot(\nn,c)^T\leq\uu$, and let $\mu=\moc(B(f))<\infty$. Then
  there is $a_0\in\N$ such that for all $a\in\N$ with $a\geq a_0$ the following
  holds: Define $\ttt=(t_1,\dots,t_d)\in\N^d$ with $t_j=a^{n_j}$ for
  $j\in\{1,\dots,d\}\setminus\{\mu\}$ and $a_\mu=-t^{n_\mu}$. Then
  \begin{displaymath}
    \sign\bigl(f(\ttt)\bigr)=-\sign(f,\pp_1).
  \end{displaymath}
\end{lemma}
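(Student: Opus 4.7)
The plan is to reuse the dominant-monomial argument from Lemma~\ref{LE:becomepos} essentially verbatim, and then to separately account for the sign contributed by flipping the $\mu$-th coordinate. The key enabling observation is that $\abs{\ttt^\pp} = a^{\nn \pp}$ for every $\pp \in \Support(f)$, since the sign flip at coordinate $\mu$ only changes the sign, not the absolute value, of $\ttt^\pp$.

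With that in hand, I would first invoke exactly the $\delta$-separation argument from the proof of Lemma~\ref{LE:becomepos} to produce $a_0 \in \N$ such that for all $a \geq a_0$,
\begin{displaymath}
  \abs{\coef(f, \pp_1)} \cdot a^{\nn \pp_1} > \sum_{i=2}^{s} \abs{\coef(f, \pp_i)} \cdot a^{\nn \pp_i}.
\end{displaymath}
Combining this with the triangle inequality and the absolute-value identity above, one obtains
\begin{displaymath}
  \abs{\coef(f, \pp_1) \cdot \ttt^{\pp_1}} > \abs{\sum_{i=2}^{s} \coef(f, \pp_i) \cdot \ttt^{\pp_i}},
\end{displaymath}
so for such $a$ the sign of $f(\ttt)$ coincides with the sign of the single monomial $\coef(f, \pp_1) \cdot \ttt^{\pp_1}$.

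It then remains only to evaluate that sign. Writing $\pp_1 = (p_{1,1}, \dots, p_{1,d})$, I would compute
\begin{displaymath}
  \ttt^{\pp_1} = (-a^{n_\mu})^{p_{1,\mu}} \prod_{j \neq \mu} (a^{n_j})^{p_{1,j}} = (-1)^{p_{1,\mu}} a^{\nn \pp_1},
\end{displaymath}
and since $\mu = \moc(B(f))$ is by definition an odd coordinate of $\pp_1$, the exponent $p_{1,\mu}$ is odd, so $(-1)^{p_{1,\mu}} = -1$. With $a > 0$ this gives $\sign(\coef(f, \pp_1) \cdot \ttt^{\pp_1}) = -\sign(f, \pp_1)$, which is the desired conclusion. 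The one point requiring care is verifying that the dominance step from Lemma~\ref{LE:becomepos} truly is insensitive to the sign flip, but this reduces to the absolute-value identity noted at the outset, so I do not anticipate a serious obstacle.
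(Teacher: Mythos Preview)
Your proposal is correct and follows essentially the same approach as the paper: both arguments transfer the dominance inequality from Lemma~\ref{LE:becomepos} via the identity $\abs{\ttt^{\pp}}=a^{\nn\pp}$, and then compute $\sign(\ttt^{\pp_1})$ using the oddness of $p_{1,\mu}$. If anything, your version is slightly more explicit than the paper's at the triangle-inequality step, since you work with the termwise inequality $\abs{\coef(f,\pp_1)}\,a^{\nn\pp_1}>\sum_{i=2}^{s}\abs{\coef(f,\pp_i)}\,a^{\nn\pp_i}$ (which is what the proof of Lemma~\ref{LE:becomepos} actually establishes) rather than citing only the weaker statement of Lemma~\ref{LE:becomepos}(i).
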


\begin{proof}
  We have $t_j=a^{n_j}>0$ for $j\in\{1,\dots,d\}\setminus\{\mu\}$,
  ${t_\mu=-a^{n_\mu}<0}$, and $p_{1\mu}$ is odd by definition of the minimal odd
  coordinate. It follows that
  \begin{equation}
    0
    < a^{\nn\pp_1}
    =-\ttt^{\pp_1}.\label{EQ:becomeposgen}
  \end{equation}
  For $i\in\{2,\dots,s\}$ we have at least $\abs{a^{\nn\pp_i}}=\abs{\ttt^{\pp_i}}$.
  This allows us to conclude from Lemma~\ref{LE:becomepos}~(i) that
  \begin{displaymath}
    \abs{\coef(f,\pp_1)\cdot \ttt^{\pp_1}}
    >\abs{\sum_{i=2}^s\coef(f,\pp_i)\cdot \ttt^{\pp_i}}.
  \end{displaymath}
  Hence $\coef(f,\pp_1)\cdot \ttt^{\pp_1}$
  determines the sign of $f(\ttt)$. Using the inequality in
  (\ref{EQ:becomeposgen}) we obtain
  \begin{displaymath}
    \sign\bigl(f(\ttt)\bigr)
    =\sign\bigl(\coef(f,\pp_1)\cdot \ttt^{\pp_1}\bigr)
    =-\sign(f,\pp_1).\qedhere
  \end{displaymath}
\end{proof}

\begin{theorem}[Correctness of \texttt{find-positive-general}]\label{TH:findposgencorrect}
  Consider $f\in\Z[\xx]$. Then the following hold:
  \begin{enumerate}[(i)]
  \item If the function \mbox{\tt find-positive} in
    Algorithm~\ref{ALG:findpos} does not fail on $f$, then
    $\texttt{\upshape find-positive-general}(f)=\texttt{\upshape find-positive}(f)$.
  \item The function \mbox{\tt find-positive-general} terminates.
  \item The function \mbox{\tt find-positive-general} returns either \mbox{\tt "failed"} or
    $\pp\in\Q^d$ with $f(\pp)>0$.
  \end{enumerate}
\end{theorem}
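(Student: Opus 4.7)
The plan is to mirror the proof of Theorem~\ref{TH:findposcorrect} closely, invoking Lemma~\ref{LE:becomeposgen} in place of Lemma~\ref{LE:becomepos} when $\mu < \infty$.

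For (i), I would observe that the matrix $B$ as initialized begins with exactly the rows of $B^+(f)$ in their original order, followed by the rows of $B^{w-}(f)$. During the first $|B^+(f)|$ iterations of the main while loop, the first row of $B$ lies in $B^+(f)$, so the $\moc$-guard is not triggered and $\mu$ stays at $\infty$. Since LP feasibility is insensitive to the order of constraints, each of these LP problems has the same feasible set as the corresponding call in \texttt{find-positive}. Hence the two loops accept or reject in lockstep, and if \texttt{find-positive} succeeds at iteration~$n$, so does \texttt{find-positive-general} with the same $(\nn,c)$ and with $\mu = \infty$ at exit. The two conditional negations of $t_\mu$ are then inert, the final while loop reduces verbatim to that of \texttt{find-positive}, and the identical return value is produced.

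For (ii), the outer while loop terminates after at most $|B^+(f)| + |B^{w-}(f)|$ iterations, because one row of $B$ is deleted per pass, and termination of \texttt{lpsolve} was already established in Theorem~\ref{TH:findposcorrect}(i). Termination of the final while loop is coupled to correctness and is handled within~(iii). For~(iii), the \texttt{"failed"} case is vacuous, so assume the function exits with a genuine return value after $n$ iterations of the main loop, and let $\pp_n$ denote the first row of $B$ at iteration~$n$. In analogy with the invariants $(\mathrm{I}_1)$, $(\mathrm{I}_2)$ in the proof of Theorem~\ref{TH:findposcorrect}, set $f_{(n)} = f - \sum_{j<n}\coef(f,\pp_j)\xx^{\pp_j}$, the sum ranging over the $n-1$ skipped rows. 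Each rejected iteration yields, via Lemma~\ref{LE:one}, that the corresponding $\pp_j$ is not a vertex of $\Newton(f_{(j)})$; iterating Lemma~\ref{LE:newton} gives $\Newton(f_{(n)}) = \Newton(f)$. Up to row permutation, the LP problem at iteration~$n$ coincides with $N^* \cdot B(f_{(n)}) \cdot \vv \leq \uu$, so the returned $(\nn,c)$ satisfies this system with $\pp_n$ as first row.

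Two cases remain. If $\mu = \infty$, then $\pp_n \in \Supportp(f)$ with $\sign(f,\pp_n) > 0$, and Lemma~\ref{LE:becomepos} applied to $f_{(n)}$ delivers $\sign(f_{(n)}(t^\nn)) > 0$ for $t$ large. If $\mu < \infty$, then $\pp_n \in \Supportwm(f)$ with $\sign(f,\pp_n) < 0$; the point $\ttt = (t_1,\dots,t_d)$ constructed by the algorithm has $t_j = t^{n_j}$ for $j \neq \mu$ and $t_\mu = -t^{n_\mu}$, which is exactly the hypothesis of Lemma~\ref{LE:becomeposgen}, and the lemma applied to $f_{(n)}$ yields $\sign(f_{(n)}(\ttt)) = -\sign(f,\pp_n) > 0$ for $t$ large. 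In both cases, the final while loop then terminates and returns a point with positive $f$-value, provided the sign conclusion transfers from $f_{(n)}$ to $f$ itself. This transfer is the main obstacle, and it is the same subtlety that already arises in Theorem~\ref{TH:findposcorrect}: each dropped $\pp_j$ lies in the convex hull of $\Support(f_{(n)})$ with strictly less than full weight on $\pp_n$, so $\nn \pp_j < \nn \pp_n$, the dropped terms are of strictly smaller asymptotic order than the dominant monomial $\coef(f,\pp_n)\xx^{\pp_n}$, and therefore cannot disturb the sign of $f$ for sufficiently large~$t$.
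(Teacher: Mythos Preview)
Your proof is correct and follows the paper's overall approach: establish (i) by noting that the two algorithms coincide while the first row of $B$ lies in $B^+(f)$, then derive (ii)--(iii) by mirroring the proof of Theorem~\ref{TH:findposcorrect} with Lemma~\ref{LE:becomeposgen} substituted for Lemma~\ref{LE:becomepos} when $\mu<\infty$. Your explicit convex-hull argument for transferring the sign conclusion from $f_{(n)}$ to $f$---showing $\nn\pp_j<\nn\pp_n$ for each dropped $\pp_j$, so that the omitted terms are asymptotically negligible---is in fact tidier than the paper's own treatment in Theorem~\ref{TH:findposcorrect}, which asserts that the reduced and full LPs have identical feasible regions; that assertion is stronger than needed and not literally true (a non-vertex $\pp_j$ can satisfy $c-1<\nn\pp_j<c+1$), whereas your weaker strict inequality is exactly what the dominance argument requires.
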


\begin{proof}
  (i) The function \texttt{find-positive-general} operates on
  $B^+(f)\circ B^{w-}(f)\circ B^{s-}(f)\circ (\zero,-1)^*$ while
  \texttt{find-positive} operates on $B^+(f)\circ B^{-}\circ (\zero,-1)^*$ so
  that there is possibly a different order of rows lying below $B^+(f)$.
  However, when \texttt{find-positive} does not fail, then the same feasible
  solution is found in both functions before touching anything outside $B^+$,
  and in line 10 of \texttt{find-positive-general} we have exited the while-loop
  with $\mu=\infty$. It follows that the if-conditions in lines 15 and 20 of
  \texttt{find-positive-general} are always false, and the rest of the code
  after the while-loop is computationally equivalent to the corresponding part
  of \texttt{find-positive} except for an expanded notation.

  Accordingly, a proof of parts (ii) and (iii) can be straightforwardly derived
  from the proof of (i) and (ii) of Theorem~\ref{TH:findposcorrect},
  respectively: If the function \texttt{find-positive} in Algorithm 1 does not
  fail on $f$, then there is nothing else to do. Otherwise we always reach lines
  15 and 20 with $\mu<\infty$, replace $t_\mu$ with its additive inverse, and
  apply Lemma~\ref{LE:becomeposgen} instead of Lemma~\ref{LE:becomepos} (ii),
  where we know that that $\sign(f,\pp_n)<0$.
\end{proof}

\section{Practical Issues}\label{SE:practical}
In this section we would like to discuss issues and share experiences related to
a practical implementation of our method.

One major benefit of our approach is the reduction of an algebraic problem to
linear programming (LP). Linear programming is a field with more than 50~years
of active algorithmic research, strongly driven by practical applicability and
aiming at robust implementations. Our own implementation combines the the
Codemist Standard Lisp (CSL)-based version of the computer algebra system
Reduce~\cite{HearnSchopf:a,Norman:91a,Norman:05a} with the Gurobi
Optimizer~\cite{Gurobi-Optimization-Inc.:14a}. Technically, CSL provides a
foreign function interface that allows us to dynamically load the Gurobi
C-library at runtime and call its functions from within symbolic mode Reduce
functions. Gurobi uses the Simplex algorithm. So far we have got no experience
with the use of implementations of polynomial methods for LP, like the interior
point method~\cite{Karmakar:84a}.

Gurobi uses floating point arithmetic with a limited precision. We want to
adress some issues related to this, which we consider of general inteterst,
because that floating point approach is typical for Simplex-based LP software.

On the one hand, LP solvers are quite good at controlling numerical stability.
With our comprehensive computations we have never encountered any problems with
false results due to LP rounding errors. On the other hand, in line 15 of
Algorithm~\ref{ALG:findpos} we obtain $n_1$, \dots,~$n_d$, $c$ as floats with
small rounding errors. These rounding errors do not affect correctness but cause
a subtle problem: Converting $n_1$, \dots,~$n_d$ into fractions, the GCD of
their denominators will typically be $1$ so that the principal denominator $m$
computed in line~18 becomes the very large product of those denominators.
Consequently, in line~19 we obtain our final $\nn$ with very large relatively
prime integer coordinates. This, in turn, renders infeasible the exponentiation
of increasing powers of $2$ with those integer coordinates and substitution of
the result into $f$ in line~11 of Algorithm~\ref{ALG:findpos} or in lines~14,
17, and 19 of Algorithm~\ref{ALG:findposgen}. There are two principle ways out,
which we call the \emph{pure LP approach} and the \emph{MIP approach},
respectively. Of course, the single design decisions made with these approaches
can be recombined to yield further, mixed, approaches.

\paragraph{The Pure LP Approach}
The pure LP approach tries to get along with the delivered floats. Specifically,
lines~18--20 in Algorithm~\ref{ALG:findpos} are skipped, and a floating point
vector is returned. The while-loops in line 11 of Algorithm~\ref{ALG:findpos}
and line 17 of Algorithm~\ref{ALG:findposgen} remain correct with floating point
exponents $\nn$. Later, in lines~11--12 of Algorithm~\ref{ALG:findz} it is
important to convert to rationals. In particular the substitution of floats into
a high-degree polynomial $f$ in line~12 could cause considerable numerical
instabilities. Subsequent root isolation to floating point precision in line 13
and back-substitution of the obtained floats in line~14 worked well with all our
computations.

\paragraph{The MIP Approach}
MIP stands for \emph{mixed integer (linear) programming}. Our
Lemma~\ref{LE:equiv} allows us to declare $n_1$, \dots,~$n_d$ as integers to the
LP solver right away, while $c$ remains real. As MIP is NP-hard~\cite{Karp:72a},
the MIP approach is considerably harder than the pure LP approach in terms of
theoretical complexity. In practice there are several advanced algorithms for
Simplex-based MIP solving, which rely in some way on considering an \emph{LP
  relaxation}, i.e., considering integer variables as real variables, and, in
the feasible case, trying to construct a mixed real integer feasible point on
the basis of an LP solution. The Gurobi solver specifically uses advanced
\emph{cutting plane}~\cite{Gomory:63a} methods for that construction. For the
largest problems discussed with our practical computations in
Section~\ref{SE:computations} below, we have observed factor of about 3 for MILP
solving compared to LP solving.

There is an interesting optimization with the MIP approach: Since in out
situation MIP feasibility is equivalent to LP feasibility by
Lemma~\ref{LE:equiv}, one can generally first check the latter in lines~14--15
of Algorithm~\ref{ALG:findpos}, and in the feasible case rerun for the
corresponding MIP problem. Using this strategy, there is always at most one MIP
solving step per root finding problem. Furthermore, one runs MIP solving only on
feasible instances. This excludes the really problematic cases, which are LP
feasible but not MIP feasible problems.

In rare cases one obtains integer solutions which are so large that they render
exponentiation and substitution in line~11 of Algorithm~\ref{ALG:findpos} or in
lines~14, 17, and 19 of Algorithm~\ref{ALG:findposgen} infeasible. One can
impose a suitable bound on the absolute values of the solutions, and in case of
exceeding that bound treat the problem as infeasible, and proceed to the next
candidate.

Another noteworthy optimization is the symbolic precomputation of a univariate
rational function for the while-loop in line 17 of
Algorithm~\ref{ALG:findposgen}. See Algorithm~\ref{ALG:precompute} for details.
A corresponding simpler variant, of course, works also for lines~10--12 in
Algorithm~\ref{ALG:findpos}.

For root isolation in line~13 of Algorithm~\ref{ALG:findz} we use the
\emph{Vincent--Collins--Akritas} method~\cite{CollinsAkritas:76a}. We obtain a
real algebraic number encoded by a univariate defining polynomial and an open
isolating interval, which is back-substituted in line~14, yielding a vector of
such real algebraic numbers as the final solution $\zz$.

\begin{algorithm}[t]\label{ALG:precompute}
  \DontPrintSemicolon
  \SetAlgoVlined
  \LinesNumbered
  \SetNlSty{}{}{}
  \SetNlSkip{1em}
  \setcounter{AlgoLine}{13}
  \BlankLine

  \ShowLn $(y_1,\dots,y_d)=(y^{n_1},\dots,y^{n_d})$ for a new variable $y$\;
  \addtocounter{AlgoLine}{1}
  \ShowLn \If{$\mu<\infty$}{
  \addtocounter{AlgoLine}{1}
  \ShowLn   $y_\mu=-y_\mu$\;
  }
  \addtocounter{AlgoLine}{1}
  \ShowLn $f_1:=f(y_1,\dots,y_d)\in\Z(y)$\;
  \addtocounter{AlgoLine}{1}
  \ShowLn $t := 2$\;
  \addtocounter{AlgoLine}{1}
  \ShowLn \While{$f_1(t)\leq0$}{
  \addtocounter{AlgoLine}{1}
  \ShowLn   $t:=2t$\;
  }
  \addtocounter{AlgoLine}{1}
  \ShowLn \Return $(y_1(t),\dots,y_d(t))$\;
  \BlankLine
  \caption{Code to replace lines~13--22 in
    Algorithm~\ref{ALG:findposgen}}
\end{algorithm}

\section{Some Practical Computations}\label{SE:computations}

We consider input polynomials originating from 4 different chemical and
biological models. This yields 929 instances altogether. For all of these
instances we are checking for zeros with strictly positive coordinates. It turns
out that for 640 of the instances we find $B^+=[~]$ in line 1 of
Algorithm~\ref{ALG:findpos}, which tells us that the corresponding polynomial is
positive definite (on the interior of the first hyperoctant). Running our method
on the 289 remaining instances, it fails in only 7.3~percent of the cases.
Table~\ref{TAB:bench} shows detailed information for the single models. It also
shows size (number of monomials), dimension (number of variables), and the
largest degree of an occurring variable for the respective largest instance. It
furthermore shows the maximal computation time for a single instance and the sum
of computation times.\footnote{All input and log files are available at
  \url{http://research-data.redlog.eu/arXiv/2015/subtropical/}.} All
computations have been carried out on a 2.8~GHz Xeon E5-4640 with the MIP
approach, yielding exact algebraic number solutions:

\begin{table}
  \centering  
  \small
  \begin{tabular}{l@{\hskip4pt}rrrrr}
    \hline
    & METH & OMBO & MBO & MAPK & Total\\
    \hline
    number of instances & 7 & 496 & 405 & 21 & \textbf{929}\\
    number of definite instances & 3 & 338 & 283 & 16 & \textbf{640}\\
    number of remaining instances & 4 & 158 & 122 & 5 & \textbf{289}\\
    found zero in & 4 & 144 & 107 & 5 & \textbf{260}\\
    failed on & 0 & 14 & 15 & 0 & \textbf{29}\\
    failed on (\% of remaining) & 0 & 8.9 & 12.3 & 0 & \textbf{7.3}\\
    size of largest instance  & 347 & 9787 & 9706 & 863438 & \textbf{863438}\\
    dimension of largest instance & 7 & 7 & 7 & 10 & \textbf{10}\\
    degree of largest instance  & 6 & 10  & 9 & 12 & \textbf{12}\\
    maximal time (s) & 0.16 & 4.68 & 10.00 & 15.87 & \textbf{15.87}\\
    total time (s) & 0.21 & 199.91 & 162.88 & 15.92 & \textbf{379.92}\\
    \hline
  \end{tabular}
  \caption{Statistics for our practical computations\label{TAB:bench}}
\end{table}

Notice that for our particular application the detection of definiteness by our
implementation establishes a perfect result. From that point of view, one could
argue that our method fails in only 3 percent of the cases.

\section*{Acknowledgments}
We would like to thank D.~Grigoriev, H.~Errami, W.~Hagemann, M.~Ko\v sta, and
A.~Weber for valuable discussions. A.~Norman realized a robust foreign function
interface for CSL Reduce. We are also grateful to Gurobi Optimization Inc.~and
to the \mbox{GeoGebra} Institute for making their excellent software free for
academic purposes. This research was supported in part by the German
Transregional Collaborative Research Center SFB/TR 14 AVACS and by the ANR/DFG
project SMArT.

\end{document}